\newtheorem{theorem}{Theorem}[section]
\newtheorem{defn}[theorem]{Definition}
\newtheorem{rem}[theorem]{Remark}
\newtheorem{lemma}[theorem]{Lemma}
\newtheorem{claim}[theorem]{Claim}
\newcommand{\Dgm}{\operatorname{Dgm}}
\newcommand{\ECC}{\operatorname{ECC}}
\newcommand\labelAndRemember[2]{%
  \expandafter\gdef\csname labeled:#1\endcsname{#2}%
  \label{#1}#2}
\newcommand\recallLabel[1]{%
  \csname labeled:#1\endcsname\tag{\ref{#1}}}
\newcommand{\scalarproduct}[2]{\langle #1, #2 \rangle}
\newcommand{\twiddle}[1]{\widetilde{#1}}
\newcommand{\inv}[0]{^{-1}}
\newcommand{\lra}[0]{\longrightarrow}
\newcommand{\R}{\mathbb{R}}
\renewcommand{\S}{\mathbb{S}}
\newcommand{\Z}{\mathbb{Z}}
\renewcommand{\phi}{\varphi} 
\newcommand{\SELECT}{\operatorname{SELECT}}
\newcommand{\ReFLECT}{\operatorname{ReFLECT}}
\newcommand{\ECT}{\operatorname{ECT}}
\newcommand{\LECT}{\operatorname{LECT}}
\newcommand{\PL}{\operatorname{PL}}
\newcommand{\Func}{\operatorname{Func}}
\newcommand{\mirror}[1]{\reflectbox{$#1$}}
\title{\LARGE On the Stability of the Euler Characteristic Transform for a Perturbed Embedding}
\author[1]{\large Jasmine George}
\author[2]{\large Oscar Lledo Osborn}
\author[3, 4]{\large Elizabeth Munch}
\author[5]{\large Messiah Ridgley II}
\author[6]{\large Elena Xinyi Wang}
\affil[1]{\footnotesize Department of Mathematics, University of Dayton}
\affil[2]{\footnotesize Department of Mathematics, Hamilton College}
\affil[3]{\footnotesize Department of Computational Mathematics, Science, and Engineering, Michigan State University}
\affil[4]{\footnotesize Department of Mathematics, Michigan State University}
\affil[5]{\footnotesize Department of Mathematics, Brandeis University}
\affil[6]{\footnotesize Institute of Geometry, Graz University of Technology} %
\begin{document}
\date{} %
\maketitle

\begin{abstract}
The Euler Characteristic Transform (ECT) is a robust method for shape classification.
It takes an embedded shape and, for each direction, computes a piecewise constant function representing the Euler Characteristic of the shape's sublevel sets, which are defined by the height function in that direction.
It has applications in TDA inverse problems, such as shape reconstruction, and is also employed with machine learning methodologies.
In this paper, we define a distance between the ECTs of two distinct geometric embeddings of the same abstract simplicial complex and provide an upper bound for this distance.
The Super Lifted Euler Characteristic Transform (SELECT), a related construction, extends the ECT to scalar fields defined on shapes.
We establish a similar distance bound for SELECT, specifically when applied to fields defined on embedded simplicial complexes.
\end{abstract}

\section{Introduction}

Shape classification using Topological Data Analysis (TDA) has found diverse applications, including the analysis of barley seeds~\cite{amezquita2022measuring}, proteins~\cite{tang2022topological}, and primate molars~\cite{wang2021statistical}.
When data intrinsically encodes \enquote{shape,} TDA provides prudent analytical approaches.
TDA tools are generally defined for \enquote{tame} sets, typically elements of an $o$-minimal structure~\cite{turner2014persistent}.
The Triangulation Theorem (Theorem~\ref{thm:triangulation}) states that any tame set into a simplicial complex. 
Simplicial complexes, as combinatorial objects, offer advantages for both theoretical analysis and computation.

Within TDA, various tools grounded in algebraic topology have been developed to analyze data while preserving crucial shape information; persistent homology is among the most powerful~\cite{EdelsbrunnerPH}.
A cornerstone of persistent homology is its stability theorem, guaranteeing that its standard visualization, the persistence diagram, remains robust against small data perturbations~\cite{CohenSteiner2007Stability}.
The later-developed Persistent Homology Transform (PHT) offers a robust and injective descriptor by capturing a shape's persistent homology from all viewing directions~\cite{turner2014persistent}.
However, despite its theoretical strengths, the PHT is computationally demanding, with the number of required directions growing exponentially with data dimensionality~\cite{curry2022many}.

Consequently, this work focuses on the more computationally tractable Euler Characteristic Curves (ECC) and the Euler Characteristic Transform (ECT)~\cite{turner2014persistent}, along with their recent variants~\cite{kirveslahti2024representing}.
For ECC and its variants, stability properties and methods for statistical inference are well-explored~\cite{Hacquard2024ECP, miller2025filtered, perez2022euler, roycraft2023boostrap, fasy2014confidence, dlotko2023euler}.
The ECT also boasts a solid theoretical foundation~\cite{curry2022many, turner2014persistent}, underpinning its successful application in diverse fields~\cite{amezquita2022measuring, CrawfordMonod, Marsh2024Temporal, Nadimpalli2023ECT, wang2021statistical}.

A key limitation of ECT, however, is the lack of general stability results, a topic that has garnered recent attention.
Notably, Nadimpalli et al.~\cite{Nadimpalli2023ECT} established stability for an ECT variant on binary images, with bounds dependent on pixel count.
Separately, Marsh et al.~\cite{marsh2023stabilityic} introduced a novel curvature-sensitive metric for one-dimensional shapes, proving ECT stability with respect to this metric, independent of the shape's triangulation.

In this work, we address ECT stability by considering a more restricted setting: comparing shapes that share an underlying simplicial complex.
Within this context, Skraba and Turner~\cite{skraba2020wasserstein} previously proved the stability of the PHT using Wasserstein distances.
Following their analysis, we prove the stability of the ECT.
Our main contributions are:
\begin{enumerate}
    \item Demonstrating the stability of the $\ECT$ with respect to the Wasserstein distance between distinct geometric embeddings of the same abstract simplicial complex,
    \item Establishing the stability of $\SELECT$ when applied to functions (filtrations) defined on embedded simplicial complexes, with respect to changes in these embeddings.
\end{enumerate}

\section{Preliminaries}
In this section, we cover the prerequisite definitions and give a high-level overview of persistent homology. 
Refer to~\cite{curry2022many},~\cite{dey2022computational} and~\cite{turner2014persistent} for details.

\subsection{Simplicies and Simplicial Complexes}

 \begin{defn}\label{simplex}
 A \textbf{geometric $k$-simplex} is the convex hull of $k+1$ affinely independent points (vertices) $\{ v_0, \cdots, v_k \}$ in $\R^d$, 
    $$\{x \in \R^d |\text{ there exists }\lambda_0,\cdots \lambda_k \in \R,  \sum_{i=0}^{k} \lambda_i = 1, x = \sum_{i=0}^{k} \lambda_i v_i  \}.$$
    Additionally, $|\sigma|$ denotes the dimension of a simplex. 

     A simplex $\sigma'$ such that $\sigma' \subseteq \sigma$ is called a \textbf{face} of $\sigma$, e.g., an edge of a triangle is a face of the triangle, and a vertex is a face of both an edge and a triangle. 
 \end{defn}

\begin{figure}
    \centering
    \includegraphics[width=0.75\linewidth]{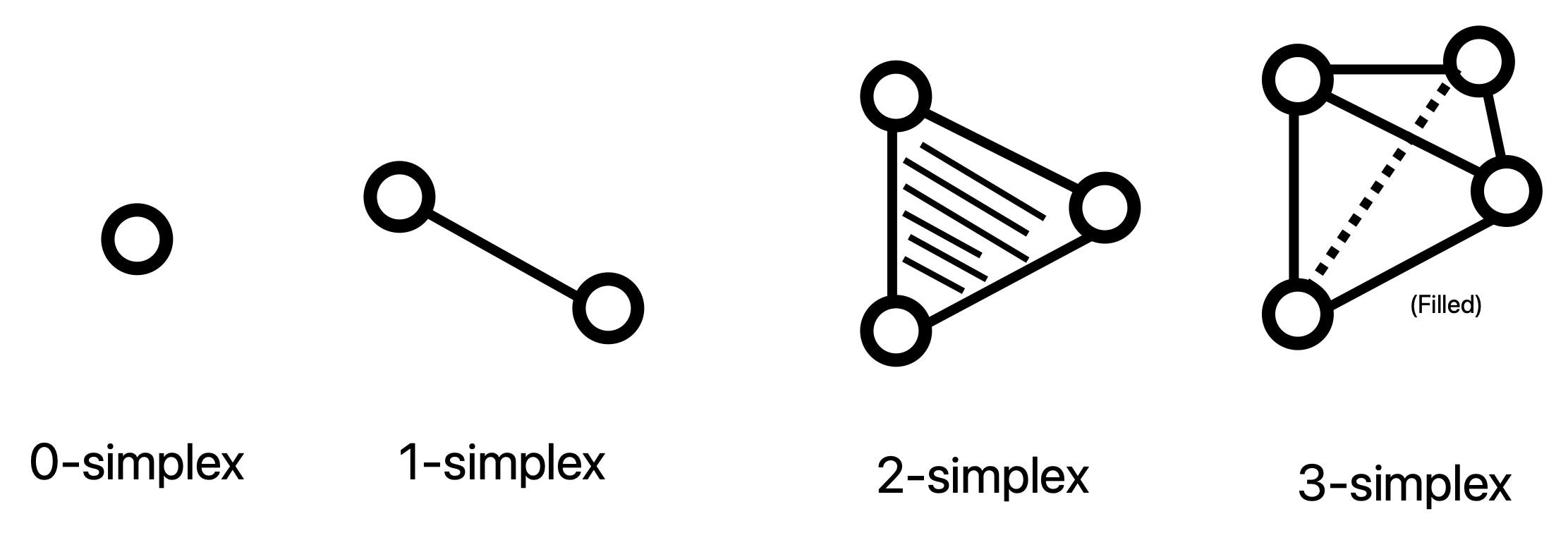}
    \caption{Examples of geometric $k$-simplices}
    \label{fig:k-simplex}
\end{figure}
See Figure \ref{fig:k-simplex} for examples of simplices. 
Simplices are common building blocks in algebraic topology. 
One might think of each dimensional simplex as the \enquote{building block} or \enquote{prototype} of manifolds of that dimension, i.e., triangles are the building blocks of two-dimensional manifolds, tetrahedra the building blocks of three-dimensional manifolds, and so on. 

A \textit{geometric simplicial complex} is a collection of geometric simplices such that, for any simplex $\sigma \in K$, all of its faces $\sigma' \subseteq \sigma$ are also in the complex ($\sigma' \in K$), and the intersection of any two simplices is either empty or a common face. 
We denote the vertex set of a geometric simplicial complex $K$ by $V(K)$, consisting of all the 0-simplices in the complex.
A \textit{$k$-complex} is a simplicial complex where the maximum dimension of any simplex it contains is $k$.
An example of a geometric simplicial complex is given on the left side of Figure \ref{fig:DT_function}. 
It is a 2-complex as its highest dimensional simplex is a triangle (2-simplex).

An important result in topology is that any nice enough \enquote{shape} can be represented as a simplicial complex~\cite{van1998tame}. 

\begin{theorem}[Triangulation Theorem~\cite{van1998tame}]\label{thm:triangulation}
Any tame set admits a definable bijection with a subcollection of open simplices in the geometric realization of a finite Euclidean simplicial complex. Moreover, this bijection can be made to respect a partition of a tame set into tame subsets.
\end{theorem}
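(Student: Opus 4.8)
The statement is the o-minimal Triangulation Theorem of~\cite{van1998tame}, and the plan is to reproduce its standard proof by induction on ambient dimension, assuming (as is standard in this setting) that the o-minimal structure expands a real closed field, so that definable affine reparametrizations and rescalings are available. Fix a tame set $A\subseteq\R^n$ and a finite partition $A=A_1\sqcup\dots\sqcup A_k$ into tame subsets; it suffices to produce a finite simplicial complex $K$ and a definable homeomorphism $\phi:|K|\to A$ \emph{compatible} with $(A_i)$, i.e.\ carrying each $A_i$ onto a union of open simplices, since this yields the ``moreover'' clause at once. Applying the definable homeomorphism $\R^n\to(-1,1)^n$ that acts on each coordinate by $t\mapsto t/\sqrt{1+t^2}$, I may assume $A$ and all $A_i$ are bounded. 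For $n\le 1$ a bounded definable set is a finite union of points and open intervals, visibly a realization of a $1$-complex with the $A_i$ as unions of open simplices; this is the base case. For the inductive step, apply the Cell Decomposition Theorem to get a finite decomposition of $\R^n$ into cells compatible with $A$, with each $A_i$, and with $\overline A$; writing $\pi:\R^n\to\R^{n-1}$ for the projection forgetting the last coordinate, the cells over a fixed base cell $D\subseteq\R^{n-1}$ are exactly the graphs $\Gamma(\zeta_1^D),\dots,\Gamma(\zeta_m^D)$ of continuous definable functions $\zeta_1^D<\dots<\zeta_m^D:D\to\R$ together with the open bands between consecutive graphs, and since $A$ is bounded only the graphs and the bounded in-between bands meet $A$.

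Next I would triangulate the base and straighten the fibers. By the inductive hypothesis, triangulate $\overline{\bigcup D}$ by $\psi:|L|\to\overline{\bigcup D}$ so that each base cell is a union of images of open simplices and, for each open simplex $\tau$ of $L$ with $\psi(\tau)\subseteq D$, each $\xi_i:=\zeta_i^D\circ\psi|_\tau$ extends continuously to the closed simplex $\overline\tau$. For such a $\tau$, the definable map
\[
 (x,t)\ \longmapsto\ \Bigl(x,\ (i-1)+\tfrac{t-\xi_i(x)}{\xi_{i+1}(x)-\xi_i(x)}\Bigr),\qquad \xi_i(x)\le t\le \xi_{i+1}(x),
\]
is a homeomorphism from the part of $\overline A$ over $\overline\tau$ onto a union of slabs $\overline\tau\times[j,j+1]$ inside the prism $\overline\tau\times[0,m-1]$, sending graph cells to horizontal faces $\tau\times\{j\}$ and band cells to open slabs $\tau\times(j,j+1)$. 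Gluing these over all $\tau$, together with $\psi\times\mathrm{id}$, assembles a definable homeomorphism from a finite union of prisms over the simplices of $L$ onto $\overline A$, compatible with every cell. Finally, each prism $\overline\tau\times[0,\ell]$ carries a standard affine triangulation refining the slab subdivision and its horizontal faces (the classical ordered-vertex prism decomposition, carried out coherently on shared faces of adjacent base simplices), which produces the desired finite complex $K$ and definable homeomorphism $|K|\to\overline A$ compatible with all cells; restricting to the union of cells that constitute $A$ completes the induction.

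The main obstacle is the hypothesis invoked at the start of the second paragraph: the fiber-straightening map is only defined and a homeomorphism when the fiber-defining functions $\zeta_i^D$ have continuous boundary values over each closed simplex of the base triangulation, and mere compatibility of that triangulation with the base cells does \emph{not} secure this, since a cell-defining function need not extend continuously to the frontier of its cell. The fix, and the technical heart of the argument in~\cite{van1998tame}, is to strengthen the inductive statement: one triangulates compatibly with a finite family of definable subsets \emph{and} a finite family of definable functions, arranging that each such function becomes continuous on every closed simplex. Formulating this reinforced statement correctly, and verifying that cell decomposition supplies exactly the auxiliary functions and sets one must feed back into it at each level, is where the real work lies; everything else above is bookkeeping.
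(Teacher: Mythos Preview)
The paper does not prove this theorem at all: it is quoted from~\cite{van1998tame} as background, and the surrounding text explicitly says that ``a full understanding of o-minimal topology is outside of the scope of this paper.'' There is therefore no proof in the paper to compare your proposal against.

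That said, your sketch is a faithful outline of the standard inductive argument in~\cite{van1998tame}: reduce to the bounded case, cell-decompose, triangulate the base by induction, straighten the fiber functions into horizontal slabs over each base simplex, and then prism-triangulate coherently. You have also correctly identified the genuine technical point, namely that the cell-defining functions $\zeta_i^D$ need not extend continuously to the closures of the base simplices, so that the naive straightening map can fail to be a homeomorphism on closed simplices. Your proposed fix---strengthening the inductive hypothesis to triangulate compatibly with a finite family of definable functions as well as sets, so that each function becomes affine (or at least continuous) on closed simplices---is exactly how the standard proof handles it. In short, your proposal goes well beyond what the paper does, and the outline is essentially correct modulo the honest caveat you yourself flag.
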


While a full understanding of o-minimal topology is outside of the scope of this paper, we can use this result to replace all of our structures of interest with simplicial complexes.
For a simplicial complex to triangulate (represent) a shape, it must be homeomorphic to the shape. 
In the context of this paper, while many of the tools we will discuss are well defined for general topological surfaces, we restrict ourselves to this simplicial representation.

Simplicial complexes can be viewed as a generalization of a graph, and as such carry combinatorial and algebraic properties. 
Despite this similarity, simplicial complexes need not exist geometrically, and can be defined abstractly.
We now introduce the combinatorial abstraction of geometric complexes.

\begin{defn}\label{Abstract Simplicial Complex}  Let $V$ be a set. A collection $K$ of non-empty subsets of $V$ is called an \textbf{abstract simplicial complex} if for any simplex $\sigma \in K$, every non-empty $\sigma ' \subseteq \sigma$ is also in $K$. Elements of $V$ are called vertices, and a simplex $\sigma$ is called a $k$-simplex if it has $k+1$ vertices. A complex is called a $k$-complex if $\underset{\sigma\in K}{\text{max}}(|\sigma|)=k$.
 \end{defn}
When it becomes necessary to distinguish between the vertex sets of different complexes, we will use the notation $V(K)$ to denote the vertex set of an abstract complex $K$, as we do for geometric complexes.

By ignoring the geometric information, the collection of simplices in a geometric simplicial complex forms an abstract simplicial complex.
To go the other way, a geometric complex $\mathcal{K}$ is called a \textit{geometric realization} of an abstract simplicial complex $K$ if there exists an embedding $e:V(K) \lra \R^n$ that takes every $k$-simplex $\{v_0,\cdots,v_k\}$ to a $k$-simplex in $J$ that is the convex hull of $\{e(v_0),\cdots,e(v_k)\}$ and the images of the vertices $\{e(v_0),\cdots,e(v_k)\}$ are affinely independent.

Any abstract $k$-complex can be embedded geometrically (without self-intersections) in $\R^{2k+1}$, and any abstract complex on $n$ vertices can be embedded in $\R^{n-1}$ as a subcomplex of an $(n-1)$-simplex (i.e. three triangles on four points as a subcomplex of a tetrahedron).
 For the remainder of the paper, unless specified to be abstract, simplicial complexes can be assumed to be geometric.
 Additionally, we assume all of them to be finite.

 \subsection{The Euler Characteristic, Curve, and Transform}
 
The Euler characteristic is a topological invariant.
In the case of simplicial complexes, it has the following formulation.
\begin{defn}\label{Euler Characteristic} 
The \textbf{Euler Characteristic} of a simplicial complex $K$ is the alternating sum of the number of simplices of each dimension. 
\begin{align*}
    \chi(K)&:=\underset{\sigma\in K}{\sum}(-1)^{|\sigma|}
\end{align*}
Alternatively, this can be written as 
\begin{align*}
    \chi(K)&=\underset{d=0}{\sum}(-1)^d \cdot |\{\sigma \in K \big{|}\;|\sigma|=d\}|\\ 
    &=|\text{vertices}|-|\text{edges}|+|\text{faces}|-|\text{tetrahedra}|\cdots \\
\end{align*}

\end{defn}

Two shapes are said to be \textit{homotopy equivalent} if they can be continuously deformed into each other. The Euler Characteristic is a homotopy invariant, and thus assigns the same value to any two shapes that are homotopy equivalent. 
However, many non-homotopic shapes also have the same Euler characteristic. 
To determine a shape definitively, we need to add more information to the Euler Characteristic.
A step in this direction is the Euler Characteristic Curve, which relies on the height function for a given direction, $\nu\in \mathbb{S}^{d-1}$.
\begin{equation*}
\begin{matrix}
   h_\nu : & K & \longrightarrow & \R \\
    & \sigma  & \longmapsto & \underset{v\in V(\sigma)}{\max} \langle v, \nu \rangle 
\end{matrix}
\end{equation*}
Where $\langle\cdot,\nu\rangle$ is the dot product for $\R^d$ and $V(\sigma)$ is the vertex set of each simplex. 
The height function is defined so that a simplex will never appear without its boundary, ensuring each sublevel set of the height function in a given direction is a subcomplex of our larger complex.  
These sublevel sets are given by $K_{\nu,a}=\{\sigma\in K|\; h_\nu(\sigma)\leq a\}$ for $\nu \in \mathbb{S}^{d-1},\;a \in \R$.
\begin{defn}\label{ECC}

The \textbf{Euler Characteristic Curve} $(\ECC)$ of a finite simplicial complex $K\subset\R$ for a given vector $\nu\in \mathbb{S}^{d-1}$ is given by:$$\ECC_\nu(K,a):=\chi(K_{\nu,a})$$
We denote the function as a whole by $\ECC_\nu(K): \R \to \mathbb{Z}$. 
\end{defn}
The $\ECC$ returns the Euler Characteristic for each sublevel set of the height function in a given direction, endowing it with more information than just the Euler Characteristic. 
Intuitively, one should think of this as a scan in a direction, as in Figure \ref{fig:Leaf ECC}. 
\begin{figure}
    \centering
    \includegraphics[width = 0.25\textwidth]{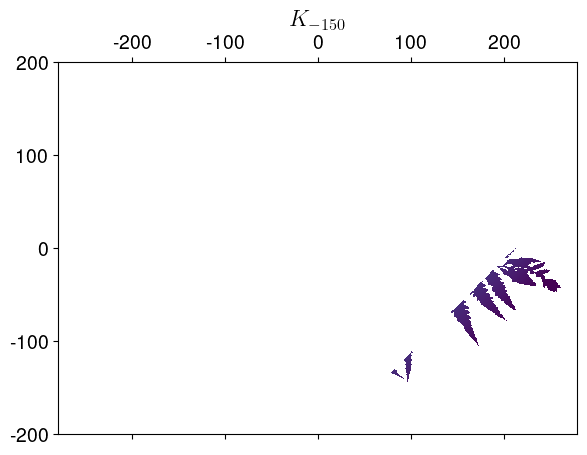}
    \includegraphics[width = 0.25\textwidth]{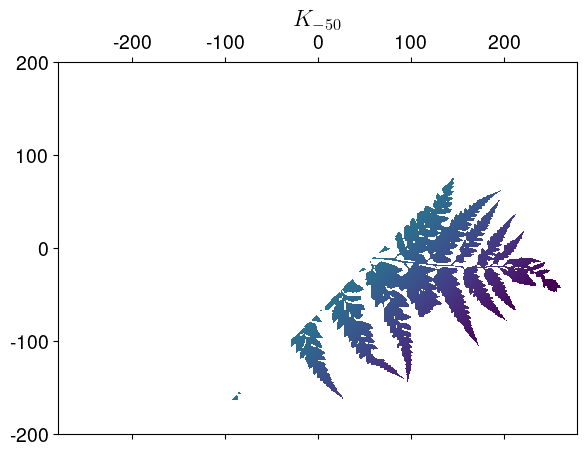}
    \includegraphics[width = 0.25\textwidth]{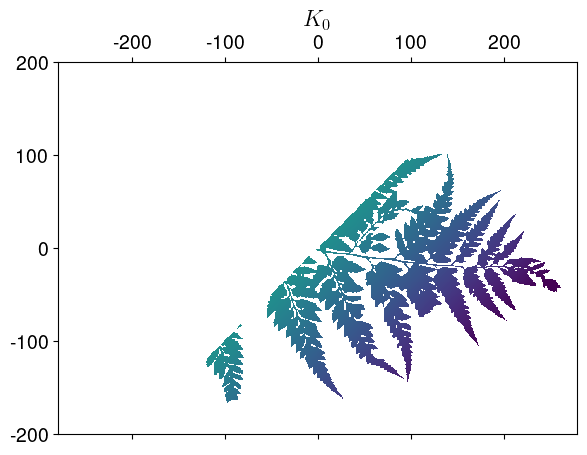}
    
    \includegraphics[width = 0.25\textwidth]{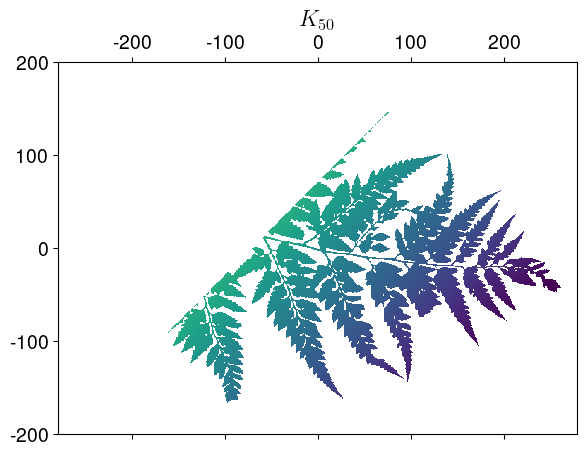}
    \includegraphics[width = 0.25\textwidth ]{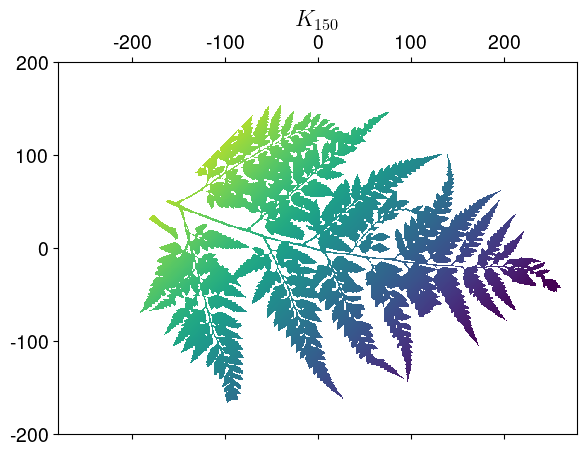}
    \includegraphics[width = 0.25\textwidth]{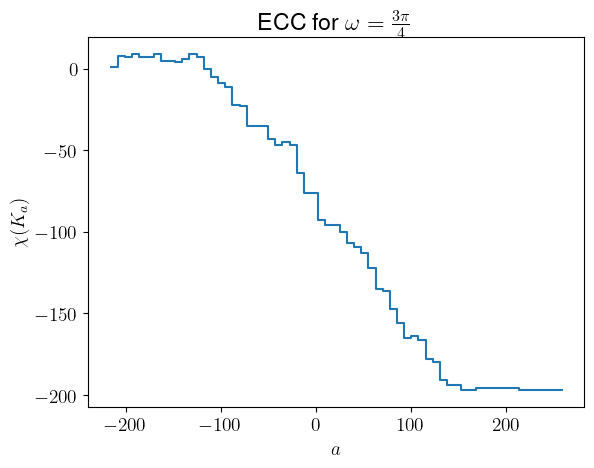}
    \caption{A ``scan'' of a fern leaf along the $\frac{3\pi}{4}$ direction and the associated ECC. Figure from~\cite{munch2023invitation}. }
    \label{fig:Leaf ECC}
\end{figure}

Two different directions might produce drastically different Euler Characteristic Curves, but small perturbations in the direction should not change the curve that much. 
In order to meaningfully talk about how different the $\ECC$s of two different shapes are, we use the $L_1$ norm.
Specifically, $\mathcal{L}^p(X)$ is the set of measurable functions $f:X \to \R$ such that
$\|f\|_p:= \left( \int_X |f|^p\right)^{1/p} <\infty$ where functions which agree almost everywhere are identified. 
We will use this distance to build distances for the inputs of interest in this paper. 

\begin{defn}\label{ECCdistance}
For simplicial complexes $K_1$ and $K_2$, scanned in directions $\nu$ and $\omega$ respectively, the $L_1$ distance between their Euler Characteristic Curves is given by
$$ 
\|\ECC_\nu(K_1)-\ECC_\omega(K_2)\|_1
:=\underset{\R}{\int}|\ECC_\nu(K_1,a)-\ECC_\omega(K_2,a)|\, da.
$$
\end{defn}

This distance is finite only if the simplicial complexes $K_1$ and $K_2$ have the same Euler Characteristic, $\chi(K_1)=\chi(K_2)$.
This condition is crucial: for a sufficiently large filtration parameter $a$, the filtered complexes $K_{1,\nu,a}$ and $K_{2,\omega,a}$ stabilize to the full complexes (i.e., $K_{1,\nu,a} = K_1$ and $K_{2,\omega,a} = K_2$).
At this point, the term $|\chi(K_1)-\chi(K_2)|$, which contributes to the distance calculation (especially for large $a$), must be zero for the distance to be finite.
If $\chi(K_1) \neq \chi(K_2)$, this term is non-zero, leading to an infinite distance.
Therefore, this distance defines a metric on the set of simplicial complexes sharing the same Euler Characteristic, and an extended metric on the set of all simplicial complexes.

In any application, since all complexes that represent data are finite, we assume that all complexes are contained in some bounding ball with radius $B$, in which case one can meaningfully talk about the distances of two complexes with different Euler Characteristics by restricting the region of integration to $[-B,B]$.

\begin{figure}
    \centering
    \includegraphics[width = \textwidth]{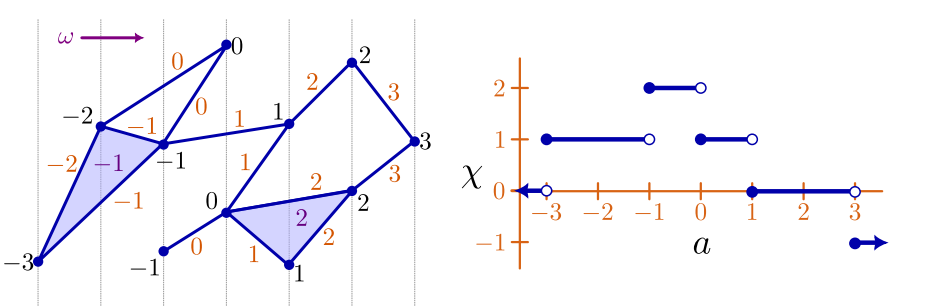}
    \caption{On the left, we have a simplicial complex with simplex-wise function $h_\omega$ labeled for the horizontal choice of $\omega = 0$. The numerical values of each vertex $v$ correspond to the value of $\langle v , \omega \rangle$.
    On the right, we have the corresponding Euler characteristic curve. Figure from \cite{munch2023invitation}.
    }
    \label{fig:DT_function}
\end{figure}
While the ECC provides more information than just the Euler Characteristic, it still cannot distinguish between shapes. 
The next step is to consider the Euler Characteristic Curve for each direction. 
\begin{defn}
\label{ECT}
Given a simplicial complex $K$ embedded in $\R^d$, the \textbf{Euler Characteristic Transform (ECT)} is 
\begin{equation*}
\begin{matrix}
   \ECT(K): & \mathbb{S}^{d-1} & \longrightarrow & \operatorname{Func}(\R,\mathbb{Z}) \\
    & \nu & \longmapsto & \ECC_\nu
\end{matrix}
\end{equation*}
where $\operatorname{Func}(\R,\mathbb{Z})$ denotes functions from $\R$ to $\mathbb{Z}$.

\end{defn}
While this notation $\ECT(K)$ does not include the parameters $a$ or $\nu$, it can be restricted to any $a\in\R$, $\nu\in \mathbb{S}^{d-1}$ as $\ECT(K,\nu,a)=\ECC_\nu(K,a)=\chi(K_{\nu,a})$. 
The $\ECT$ provides enough data to distinguish between shapes, as it is injective as shown in~\cite{turner2014persistent}.
This means that if two given geometric simplicial complexes have the same ECT, they must have been the same to begin with. 
\subsection{Persistent Homology}

Our results also leverage persistent homology, which is closely related to the ECT.
The \textit{k-th homology group} of a shape is a way of representing the $k$-dimensional holes that a shape has as a basis for a vector space\footnote{Generally speaking, the homology group is a module, but in TDA convention is to use coefficients from a field (e.g.,v $\mathbb{Z}_2$, $\mathbb{Q}$, $\R $), which gives the homology groups a vector space structure.}. 
The $k$-th Betti number, $\beta_k$, is the dimension of this vector space, but can also be thought of as the number of $k$-dimensional holes. 
Elements of a homology group (or its basis elements) are called homology classes.
\begin{rem}
    By the Euler-Poincaré formula, the Euler Characteristic can also be determined by an alternating sum of the Betti numbers, $\chi(K)=\underset{k}\sum(-1)^k\beta_k(K)$. 
    For a finite simplicial complex, this sum is finite because at a certain index $i_0$, $\beta_i = 0$ for all $i\geq i_0$.
    Notably, for $p$-complexes, $\beta_{k} = 0$ for $k>p$.
\end{rem}

Recall that the Euler Characteristic alone is not enough information to determine a shape. 
Given that the Euler Characteristic can be defined in terms of homology groups, it is not a surprise that homology groups alone are not enough to determine a shape. 
In order to encode more information into the Euler Characteristic, we used a function that returned a series of sublevel sets, each of which was included in the next. 
The generalization of this, a filtration, allows us to encode more information into homology groups in a similar way.

\begin{defn}\label{def:filtration}
    Fix a given simplicial complex, $K$, a a real-valued function, $f: K \rightarrow \R$, and a sequence of values $a_1 < a_2< \cdots a_n$. 
    A \textbf{filtration} is the collection of sublevel sets $K_{a_i}:=f^{-1}(-\infty,a_i]$. Note that for all $a\leq b$ there are inclusions $K_a\subseteq K_b$.  
    Additionally, we take $a_0=-\infty$ in order to set $\mathbb{T}_{a_0}=\emptyset$. 
    The filtration induced by $f$ is written as $\mathcal{F}_f$.
\end{defn}

In practice, the sequence of values, $a_0,a_1, \dots, a_n$ is chosen to be critical values so that the homology group of the sublevel sets changes at each step. 
The height function defined previously induces a filtration on a simplicial complex. 
We can also define more general functions on simplicial complexes by assigning each open simplex a value. 
Recalling the height function, a useful property is that a simplex never appears in the filtration without its boundary, and thus, for any filtration value, the resulting filtered object remains a simplicial complex.
A function that satisfies this condition is called simplex-wise monotone.

\begin{defn}\label{def:simplexfil}
A function, $f$, on a simplicial complex, $K$, is \textbf{simplex-wise monotone} if for all $\sigma' \subset \sigma$, the function satisfies $f(\sigma')\leq f(\sigma)$.
\end{defn}

We use persistence diagrams to represent how the homology groups of a filtration change over time.
A \textit{k dimensional persistence diagram} ($\Dgm_k(\mathcal{F})$) of a filtration, $\mathcal{F}$,  is a multiset of points in $\R^{2+}:=\{\R\cup\{\infty\}\}^2$ where the $x$ coordinate of a point is the filtration value $a_i$ where a k-dimensional homology class (``k-dimensional hole") appears. 
The $y$ coordinate of the same point is the filtration value $a_j$ where it disappears.
Conventionally, we refer to the appearance of a homology class as its birth, and the disappearance as its death.
If a homology class does not ever die, it has $y$-value $\infty$. 
We also allow multiplicity of each point and for infinitely many points on the diagonal. 
When discussing a diagram with all persistence classes of the space mapped, only $\Dgm(\mathcal{F})$ is used, though some marker to separate classes by dimension is prudent.
A persistence diagram gives a visual summary of the way homology changes through filtration values.
Points close to the diagonal indicate a smaller feature, whereas those further from the diagonal are generally considered more important to the structure of the shape as a whole. 

To construct a distance between two diagrams, we evoke the following matching scheme.

\begin{defn}
Let $\Dgm_k(\mathcal{F})$ and $\Dgm_k(\mathcal{G})$ be persistence diagrams represented by countable multisets in $\R^{2+}$. 
A \emph{matching} $\textbf{M}$ between $\Dgm_k(\mathcal{F})$ and $\Dgm_k(\mathcal{G})$ is a subset of $\{\Dgm_k(\mathcal{F})\cup \Delta\} \times \{\Dgm_k(\mathcal{G})\cup \Delta\}$ such that every element in $\Dgm_k(\mathcal{F})$ and $\Dgm_k(\mathcal{G})$ appears in exactly one pair. 
The abstract diagonal element $\Delta:=\{(x,x)\;|\;x\in\R\}$ can appear in many pairs.
\end{defn}

There are many distances available for use between persistence diagrams; however, the one that best suits our use is the \textit{Wasserstein Distance}.
This distance uses the $l_p$ distance for any points with non-infinite $y$ coordinate not on the diagonal. 
The Wasserstein distance is also called the \enquote{optimal transport} distance, and the distance between two points is often referred to as the \enquote{cost}.
We define the $l_p$ distance from any point in $\R^2$ to $\Delta$ by the perpendicular distance, 
$$\|(a,b)-\Delta\|_p=\inf_{t\in \R}\|(a,b)-(t,t)\|_p=\|(a,b)-\left(\frac{b-a}{2}, \frac{b-a}{2}\right)\|_p=2^{\frac{1-p}{p}} |b-a|$$
and $\|(a,b)-\Delta\|_\infty=\frac{b-a}{2}$.
Furthermore we say that $\|\Delta-\Delta\|_p=0$ for all $p$ and $\|(a,\infty)-(b, \infty)\|_p = |a-b|$.
We look at all possible matchings and pick the infimum of the cost as the distance. 

\begin{defn}[Wasserstein Distance]\label{def:wasserstein}
  Given two diagrams, $\Dgm_k(\mathcal{F})$ and $\Dgm_k(\mathcal{G})$, the $(p,q)$-Wasserstein distance is
  $$W_{p,q}(\Dgm_k(\mathcal{F}), \Dgm_k(\mathcal{G})) = \inf\limits_{\textbf{M}} \left(\sum\limits_{(x,y)\in \textbf{M}} \|x-y\|_q^p\right)^{\frac{1}{p}} $$
where $\textbf{M}\subset \{\Dgm_k(\mathcal{F})\cup \Delta\} \times \{\Dgm_k(\mathcal{G}\cup \Delta\}$ is a matching.  The total $(p,q)$-Wasserstein distance is defined as

$$W_{p,q}(\Dgm(\mathcal{F}), \Dgm(\mathcal{G})) = \left(\sum_{k}\left(W_{p,q}(\Dgm_k(\mathcal{F}), \Dgm_k(\mathcal{G})\right)^p\right)^{\frac{1}{p}}$$
\end{defn}
\begin{rem}
    For any two persistence diagrams with differing numbers of points with y-value infinity, any Wasserstein distance is infinity as $|(a,\infty)-(\infty,\infty)|=\infty$. 
    This is another similarity between the Euler Characteristic Curve and persistence diagrams, as any simplicial complex with different Euler Characteristics will have different amounts of homology classes that persist until infinity.
 \end{rem}
\section{Prior Work}
In this section, we discuss work related to our results. 

The below theorem uses the generalized definition of the ECC, that takes a general filtration rather than being restricted to a height function.
\begin{theorem}[Dłotko \& Gurnari~\cite{dlotko2023euler} Proposition 2] \label{lowerbound}
Given geometric simplicial complexes $K_1, K_2$, with filtrations $\mathcal{F}$ and $\mathcal{G}$ respectively, then the $L_1$ distance of the Euler Characteristic Curves between them is bounded by the Wasserstein distance between the filtrations.
\begin{equation*}
    {\|\ECC_\mathcal{F}(K_1)-\ECC_\mathcal{G}(K_2)\|_1 \leq 2W_{1,\infty} (\Dgm(\mathcal{F}(K_1)),\Dgm(\mathcal{G}(K_2))}
\end{equation*}
\end{theorem}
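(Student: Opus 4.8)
The plan is to reduce the $L_1$ distance between the two Euler Characteristic Curves to a sum, over homological dimensions, of $L_1$ distances between the ``dimension-$k$ Euler curves,'' and then bound each of these by the corresponding Wasserstein distance between persistence diagrams. First I would recall that for any filtration $\mathcal{F}$ of a simplicial complex $K$, the Euler characteristic of the sublevel set $K_a$ is the alternating sum of Betti numbers $\chi(K_a) = \sum_k (-1)^k \beta_k(K_a)$, where $\beta_k(K_a)$ is the number of points of $\Dgm_k(\mathcal{F})$ whose birth is $\le a$ and whose death is $> a$ (this is the standard ``staircase'' count: a class contributes to $\beta_k$ exactly on the half-open interval $[b,d)$ it is alive). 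Writing $\mathbf{1}_{[b,d)}$ for the indicator of that interval, we get the pointwise identity $\ECC_\mathcal{F}(K,a) = \sum_k (-1)^k \sum_{(b,d)\in \Dgm_k(\mathcal{F})} \mathbf{1}_{[b,d)}(a)$. Subtracting the analogous expression for $(K_2,\mathcal{G})$ and applying the triangle inequality for $\|\cdot\|_1$ yields
\begin{equation*}
\|\ECC_\mathcal{F}(K_1)-\ECC_\mathcal{G}(K_2)\|_1 \le \sum_k \left\| \sum_{(b,d)\in \Dgm_k(\mathcal{F})} \mathbf{1}_{[b,d)} - \sum_{(b,d)\in \Dgm_k(\mathcal{G})} \mathbf{1}_{[b,d)} \right\|_1 .
\end{equation*}

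Next, for each fixed $k$, I would fix an optimal matching $\mathbf{M}$ realizing $W_{1,\infty}(\Dgm_k(\mathcal{F}),\Dgm_k(\mathcal{G}))$ (or an $\varepsilon$-optimal one, then let $\varepsilon\to 0$). The idea is that the difference of the two sums of indicator functions telescopes across the matched pairs: a pair $((b,d),(b',d'))$ contributes $\mathbf{1}_{[b,d)} - \mathbf{1}_{[b',d')}$, and a pair matched to the diagonal, say $(b,d)$ with $\Delta$, contributes $\mathbf{1}_{[b,d)}$ (which equals $\mathbf{1}_{[b,d)} - \mathbf{1}_{[c,c)}$ for the nearest diagonal point $c=(b+d)/2$, since $\mathbf{1}_{[c,c)}\equiv 0$). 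Then $\|\mathbf{1}_{[b,d)} - \mathbf{1}_{[b',d')}\|_1 = |b-b'| + |d-d'| \le 2\max(|b-b'|,|d-d'|) = 2\|(b,d)-(b',d')\|_\infty$, with the off-diagonal case similarly giving $\|\mathbf{1}_{[b,d)}\|_1 = d-b = 2\cdot\frac{d-b}{2} = 2\|(b,d)-\Delta\|_\infty$. Summing over all pairs in $\mathbf{M}$ and using subadditivity of $\|\cdot\|_1$ gives
\begin{equation*}
\left\| \sum_{(b,d)\in \Dgm_k(\mathcal{F})} \mathbf{1}_{[b,d)} - \sum_{(b,d)\in \Dgm_k(\mathcal{G})} \mathbf{1}_{[b,d)} \right\|_1 \le \sum_{(x,y)\in \mathbf{M}} 2\|x-y\|_\infty = 2\, W_{1,\infty}(\Dgm_k(\mathcal{F}),\Dgm_k(\mathcal{G})).
\end{equation*}
Summing over $k$ and recognizing $\sum_k W_{1,\infty}(\Dgm_k(\mathcal{F}),\Dgm_k(\mathcal{G})) = W_{1,\infty}(\Dgm(\mathcal{F}),\Dgm(\mathcal{G}))$ from Definition~\ref{def:wasserstein} with $p=1$ finishes the argument.

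The main obstacle I anticipate is making the ``telescoping across the matching'' step rigorous when the diagrams are infinite multisets (including the infinitely many diagonal points) and when some classes have infinite death time. For the essential classes with $d=\infty$, one must check that an optimal matching pairs them with each other (otherwise the cost, hence both sides, is infinite — consistent with the Remark after Definition~\ref{def:wasserstein}), and that $\mathbf{1}_{[b,\infty)} - \mathbf{1}_{[b',\infty)}$ has $L_1$ norm $|b-b'| = \|(b,\infty)-(b',\infty)\|_\infty$ under the stated convention, so the bound is unaffected. One also needs to justify interchanging the (possibly infinite) sum over matched pairs with the integral defining $\|\cdot\|_1$; this is handled by monotone/dominated convergence since all the indicator differences are controlled in absolute value by an integrable envelope supported on the bounded range of filtration values, and the diagonal-to-diagonal pairs contribute zero. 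A secondary subtlety is that the inequality $|b-b'|+|d-d'| \le 2\|x-y\|_\infty$ is exactly where the factor of $2$ enters and where the $q=\infty$ in $W_{1,\infty}$ is used — with a different $q$ the constant would change — so the statement should be read with that specific Wasserstein metric in mind.
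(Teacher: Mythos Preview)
The paper does not supply its own proof of this theorem: it is stated in the ``Prior Work'' section as a citation of Dłotko and Gurnari's Proposition~2, and is then used as a black box in the proof of Theorem~\ref{thm:ectstablity}. So there is no in-paper argument to compare your proposal against.

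That said, your reconstruction is essentially correct and is the standard way this bound is obtained: express each $\beta_k(K_a)$ as a sum of interval indicators coming from the persistence diagram, split the ECC difference dimension by dimension via the triangle inequality, telescope across an optimal matching, and use $\|\mathbf{1}_{[b,d)}-\mathbf{1}_{[b',d')}\|_1 \le |b-b'|+|d-d'| \le 2\|(b,d)-(b',d')\|_\infty$. One small imprecision: you write the first of these as an equality, but it is only an inequality when the two intervals are disjoint (e.g.\ $[0,1)$ and $[2,3)$ give $L_1$ norm $2$, not $|0-2|+|1-3|=4$); the inequality is all you need, so nothing breaks. Your treatment of essential classes and of the diagonal, and your identification $\sum_k W_{1,\infty}(\Dgm_k(\mathcal{F}),\Dgm_k(\mathcal{G})) = W_{1,\infty}(\Dgm(\mathcal{F}),\Dgm(\mathcal{G}))$ via Definition~\ref{def:wasserstein} with $p=1$, are both sound.
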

This result gives a relationship between the ECC of a complex with a filtration and the persistence diagrams obtained from the same filtration. 
We can further restrict the filtration to be the height function in a direction, making it suitable for the ECT.
\begin{equation*}
   {\|\ECC_\nu(f(K)) - \ECC_\nu(g(K))\|_1 \leq 2W_{1,\infty}(\Dgm(h_\nu^{f(K)}),\Dgm(h_\nu^{g(K)})).}
\end{equation*}

The next theorem uses the $W_{1,1}$ distance, as opposed to the $W_{1,\infty}$.
We also restrict the bound to only include embeddings of the same abstract simplicial complex, as that is a requirement for the next result we present.
\begin{theorem}[{Skraba \& Turner~\cite[Thm.~5.5]{skraba2020wasserstein}}] \label{upperbound}
Let $K$ be an abstract simplicial complex, and let $\omega_{d-2}$ denote the area of $ \mathbb{S}^{d-2}$. Given two embedding functions, $f,g:K\longrightarrow\R^d$, let $$C_d = 2 \omega_{d-2} \int_0^\frac{\pi}{2}\cos(\theta)\sin^{d-2}(\theta)d\theta \text{ and } C_K = \max\limits_{{v \in V(K)}} |\{ \sigma \;|\; v \in \sigma \}  |. $$
Then
\begin{equation*}
           {\int_{\nu\in S^{d-1}}W_{1,1}(\Dgm(h_\nu^{f(K)}),\Dgm(h_\nu^{g(K)})) d\nu \leq C_K C_d \sum_{v\in V(K)}\|f(v) - g(v)\|_2}.
\end{equation*}
\end{theorem}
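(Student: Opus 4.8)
The plan is to factor the argument into a single-direction Wasserstein estimate and an integral over the sphere. The one genuinely substantial ingredient is Wasserstein stability of persistence diagrams for simplex-wise monotone functions on a \emph{fixed} complex: for any two simplex-wise monotone $\phi,\psi\colon K\to\R$ (diagrams taken over all homology degrees),
\[
  W_{1,1}\!\left(\Dgm(\phi),\Dgm(\psi)\right)\;\le\;\sum_{\sigma\in K}\lvert\phi(\sigma)-\psi(\sigma)\rvert .
\]
Granting this, the remainder is bookkeeping plus one elementary spherical integral.

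First I would establish (or cite) the displayed bound. The standard route is to connect $\phi$ and $\psi$ by the straight-line path $\phi_t=(1-t)\phi+t\psi$, note that each $\phi_t$ is again simplex-wise monotone and that the total preorder it induces on the simplices of $K$ — hence the persistence pairing — is locally constant in $t$ except at finitely many parameters where two simplices exchange order, and then bound the total displacement of the diagram by a telescoping sum over these transpositions, each contributing at most the change in the relevant simplex values; equivalently, one writes down the matching induced by the identity on the simplices of $K$ and estimates its cost directly. I expect this to be the main obstacle: handling essential classes, coincident order-changes, and the accounting that promotes local transpositions to the global $L^1$ bound is exactly the technical core of~\cite{skraba2020wasserstein}, while everything downstream of it is routine.

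Next, fix a direction $\nu\in\mathbb{S}^{d-1}$ and apply the bound with $\phi=h_\nu^{f(K)}$ and $\psi=h_\nu^{g(K)}$. Since $h_\nu^{f(K)}(\sigma)=\max_{v\in V(\sigma)}\langle f(v),\nu\rangle$ (and likewise for $g$), and $\lvert\max_i a_i-\max_i b_i\rvert\le\max_i\lvert a_i-b_i\rvert$, we get $\lvert h_\nu^{f(K)}(\sigma)-h_\nu^{g(K)}(\sigma)\rvert\le\sum_{v\in V(\sigma)}\lvert\langle f(v)-g(v),\nu\rangle\rvert$. Summing over $\sigma\in K$ and swapping the order of summation, each vertex $v$ is counted once per simplex containing it, so the total is $\sum_{v\in V(K)}\lvert\{\sigma\mid v\in\sigma\}\rvert\cdot\lvert\langle f(v)-g(v),\nu\rangle\rvert\le C_K\sum_{v\in V(K)}\lvert\langle f(v)-g(v),\nu\rangle\rvert$. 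Combining with the stability bound yields $W_{1,1}(\Dgm(h_\nu^{f(K)}),\Dgm(h_\nu^{g(K)}))\le C_K\sum_{v\in V(K)}\lvert\langle f(v)-g(v),\nu\rangle\rvert$ for every $\nu$.

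Finally I would integrate this inequality over $\mathbb{S}^{d-1}$ and pull $C_K$ and the finite vertex sum outside, reducing everything to evaluating $\int_{\mathbb{S}^{d-1}}\lvert\langle w,\nu\rangle\rvert\,d\nu$ for a fixed $w\in\R^d$. By rotational invariance of the surface measure this equals $\|w\|_2\int_{\mathbb{S}^{d-1}}\lvert\nu_1\rvert\,d\nu$, and foliating the sphere by the latitude spheres $\mathbb{S}^{d-2}$ of radius $\sin\theta$ (so $\nu_1=\cos\theta$) gives $\int_{\mathbb{S}^{d-1}}\lvert\nu_1\rvert\,d\nu=\omega_{d-2}\int_0^{\pi}\lvert\cos\theta\rvert\sin^{d-2}\theta\,d\theta=2\omega_{d-2}\int_0^{\pi/2}\cos\theta\sin^{d-2}\theta\,d\theta=C_d$. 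Substituting $w=f(v)-g(v)$ and summing over $v$ yields $\int_{\mathbb{S}^{d-1}}W_{1,1}(\Dgm(h_\nu^{f(K)}),\Dgm(h_\nu^{g(K)}))\,d\nu\le C_KC_d\sum_{v\in V(K)}\|f(v)-g(v)\|_2$, which is the claim. Beyond the stability input, the only points requiring care are the max-of-dot-products Lipschitz estimate, the Fubini/sum-swap, and the one-dimensional integral producing $C_d$.
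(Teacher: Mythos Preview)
The paper does not supply its own proof of this theorem: it is quoted in Section~3 as prior work from Skraba--Turner and then invoked as a black box in the proof of Theorem~\ref{thm:ectstablity}. Your outline is correct and is essentially the Skraba--Turner argument itself --- cellular $W_{1,1}$-stability gives the $\sum_{\sigma\in K}|\phi(\sigma)-\psi(\sigma)|$ bound, the max-Lipschitz estimate together with the coface count produces the $C_K$ factor, and the rotation-invariant spherical integral $\int_{\mathbb{S}^{d-1}}|\langle w,\nu\rangle|\,d\nu=C_d\|w\|_2$ closes the loop --- so there is nothing to compare against here beyond noting that you have correctly reconstructed the cited proof.
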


While the two results utilize two different Wasserstein distances, they are closely related.
This difference is remedied via the following theorem.
\begin{theorem}[Turner~\cite{turner_medians_2020}]
For any two given persistence diagrams, $\Dgm(\mathcal{F})$ and $\Dgm(\mathcal{G})$,
\begin{equation*}
    W_{p,\infty}(\Dgm(\mathcal{F}),\Dgm(\mathcal{G}))\leq W_{p,p}(\Dgm(\mathcal{F}),\Dgm(\mathcal{G})) \leq 2W_{p,\infty}(\Dgm(\mathcal{F}),\Dgm(\mathcal{G}))
\end{equation*}
\end{theorem}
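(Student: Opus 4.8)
The plan is to reduce the statement to an elementary comparison of the $\ell_p$ and $\ell_\infty$ norms on $\R^2$, together with the observation that the cost of a \emph{fixed} matching is monotone in the ground metric and that an infimum of a pointwise-smaller function is smaller. The first step is to record the pointwise estimate that drives everything: for any pair $(x,y)$ that can occur in a matching between $\Dgm_k(\mathcal{F})$ and $\Dgm_k(\mathcal{G})$---so $x,y$ are diagram points, or one of them is the abstract diagonal $\Delta$---the two ground costs appearing in Definition~\ref{def:wasserstein} satisfy
\[
\|x-y\|_\infty \;\le\; \|x-y\|_p \;\le\; 2^{1/p}\,\|x-y\|_\infty \;\le\; 2\,\|x-y\|_\infty,
\]
the last inequality because $p\ge 1$. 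For a match between two off-diagonal points this is exactly $\|z\|_\infty^p \le \|z\|_p^p = |z_1|^p + |z_2|^p \le 2\,\|z\|_\infty^p$ applied to $z = x - y \in \R^2$. For a match of an off-diagonal point $(a,b)$ with $\Delta$, the perpendicular-distance formulas recalled before Definition~\ref{def:wasserstein} give $\|(a,b)-\Delta\|_p = 2^{(1-p)/p}|b-a|$ and $\|(a,b)-\Delta\|_\infty = \tfrac{1}{2}|b-a|$, whose ratio is again exactly $2^{1/p}$; and for a $\Delta$--$\Delta$ match or a match of two essential points $(a,\infty),(b,\infty)$ the two costs coincide, so the estimate is trivial there.

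The second step is to transport this through matchings and take infima. For any matching $\mathbf{M}$ between $\Dgm_k(\mathcal{F})$ and $\Dgm_k(\mathcal{G})$, summing the $p$-th powers of the displayed estimate over all pairs in $\mathbf{M}$ and taking $p$-th roots gives
\[
\Bigl(\sum_{(x,y)\in\mathbf{M}}\|x-y\|_\infty^p\Bigr)^{1/p}
\;\le\;
\Bigl(\sum_{(x,y)\in\mathbf{M}}\|x-y\|_p^p\Bigr)^{1/p}
\;\le\;
2\Bigl(\sum_{(x,y)\in\mathbf{M}}\|x-y\|_\infty^p\Bigr)^{1/p}.
\]
Since $W_{p,\infty}(\Dgm_k(\mathcal{F}),\Dgm_k(\mathcal{G}))$ and $W_{p,p}(\Dgm_k(\mathcal{F}),\Dgm_k(\mathcal{G}))$ are the infima over the same family of matchings of, respectively, the left-most and middle quantities above, taking the infimum throughout the chain yields $W_{p,\infty}(\Dgm_k(\mathcal{F}),\Dgm_k(\mathcal{G})) \le W_{p,p}(\Dgm_k(\mathcal{F}),\Dgm_k(\mathcal{G})) \le 2\,W_{p,\infty}(\Dgm_k(\mathcal{F}),\Dgm_k(\mathcal{G}))$, with no appeal to the existence of optimal matchings. (If the two diagrams have different numbers of essential points, every matching contains a pair of cost $+\infty$ and all three quantities are $+\infty$, so the chain holds trivially.)

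The last step is to lift from a single homological degree to the total diagram: since $W_{p,q}(\Dgm(\mathcal{F}),\Dgm(\mathcal{G}))^p = \sum_k W_{p,q}(\Dgm_k(\mathcal{F}),\Dgm_k(\mathcal{G}))^p$ by definition, raising the two per-degree inequalities to the $p$-th power and summing over $k$ reproduces the chain $W_{p,\infty} \le W_{p,p} \le 2\,W_{p,\infty}$ for the total distances. I do not expect a genuine obstacle, since this is an elementary inequality; the only part requiring a moment's care is the exhaustive case analysis behind the pointwise estimate---verifying the $\ell_p$ versus $\ell_\infty$ comparison verbatim for every type of matched pair permitted by Definition~\ref{def:wasserstein}, including pairs involving the diagonal and the essential (infinite) points.
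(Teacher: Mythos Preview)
Your proof is correct: the pointwise comparison $\|z\|_\infty \le \|z\|_p \le 2^{1/p}\|z\|_\infty$ on $\R^2$, together with the explicit perpendicular-distance formulas for diagonal matches and the conventions for essential and $\Delta$--$\Delta$ pairs, gives the matching-by-matching estimate, and taking infima and then $\ell^p$-summing over degrees finishes it exactly as you wrote. The paper does not supply its own proof of this theorem---it is simply quoted from Turner~\cite{turner_medians_2020}---so there is nothing to compare against; your argument is the standard one and would serve perfectly well as a self-contained justification here.
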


We want to use these two results to bound the distance between the Euler Characteristic Transforms based on the distance of two embedding functions of the same abstract complex. 
We propose the following metric on the space of ECT functions, building on the $L_1$ norm. 
\begin{defn}[ECT Distance~\cite{curry2022many}]
    Given two embeddings of the same abstract simplicial complex $K$, the distance between the Euler Characteristic Transform of each is defined as 
    $$d_{\ECT}(ECT(f(K)),ECT(g(K))) := \underset{\nu\in \mathbb{S}^{d-1}}{\int} \|\ECC_\nu(f(K)) - \ECC_\nu(g(K))\|_1 \ d\nu.$$
\end{defn}
Here, the distance is a metric, as embeddings of the same abstract simplicial complex are guaranteed to have the same Euler Characteristic when the filtrations are complete and will therefore produce a finite distance in every direction. 
Thus, all we require is the following lemma to establish that this is indeed a metric. 
\begin{lemma}
\label{IntOfMetricSd-1}
 Let $\rho$ be a metric on a space of $\mathcal{L}^p(\R)$ functions $\R \to \R$. 
 For $a,b:\mathbb{S}^{d-1} \times \R \to \R$ we denote $a_\nu(t) = a(\nu,t)$ so that $a_\nu: \S^{d-1} \to \R$. 
 Then
 $\rho'(a,b) :=\underset{\mathbb{S}^{d-1}}{\int} \rho(a_\nu,b_\nu))\,d\nu$,
 if it exists, is also a metric.
\end{lemma}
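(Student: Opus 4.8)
The plan is to verify the three metric axioms for $\rho'$ directly, treating the integral over $\mathbb{S}^{d-1}$ as the machinery that transports each axiom from $\rho$ to $\rho'$. Throughout, I would keep in mind that the hypothesis ``if it exists'' lets me assume the integrand $\nu \mapsto \rho(a_\nu, b_\nu)$ is integrable whenever I need to manipulate $\rho'(a,b)$; I should also note at the outset that the integrand is nonnegative, so $\rho'(a,b) \in [0,\infty]$ is always well-defined as an extended real, and finiteness is exactly the ``if it exists'' clause.

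First I would handle \emph{non-negativity and the identity of indiscernibles}. Non-negativity is immediate since $\rho(a_\nu,b_\nu) \ge 0$ pointwise. For the identity axiom: if $a = b$ as functions $\mathbb{S}^{d-1}\times\R \to \R$, then $a_\nu = b_\nu$ for every $\nu$, so $\rho(a_\nu,b_\nu) = 0$ for all $\nu$ and $\rho'(a,b)=0$. Conversely, if $\rho'(a,b) = 0$, then the nonnegative measurable function $\nu \mapsto \rho(a_\nu,b_\nu)$ has zero integral, hence vanishes for almost every $\nu \in \mathbb{S}^{d-1}$; for such $\nu$, $\rho(a_\nu,b_\nu)=0$ forces $a_\nu = b_\nu$ in $\mathcal{L}^p(\R)$, i.e. $a(\nu,\cdot) = b(\nu,\cdot)$ almost everywhere on $\R$. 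So $a = b$ almost everywhere on $\mathbb{S}^{d-1}\times\R$ (by Fubini), which is the appropriate sense of equality in an $\mathcal{L}^p$-type space — I would add a sentence making explicit that, just as $\rho$ lives on equivalence classes modulo a.e.\ equality, $\rho'$ should be read on the corresponding equivalence classes on $\mathbb{S}^{d-1}\times\R$. \emph{Symmetry} is then a one-liner: $\rho(a_\nu,b_\nu) = \rho(b_\nu,a_\nu)$ pointwise, so integrating gives $\rho'(a,b) = \rho'(b,a)$.

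The substantive step is the \emph{triangle inequality}. Given $a,b,c : \mathbb{S}^{d-1}\times\R\to\R$ with $\rho'(a,c)$ and $\rho'(c,b)$ finite (if either is infinite the inequality is trivial), I would apply the pointwise triangle inequality for $\rho$: for each $\nu$,
\begin{equation*}
\rho(a_\nu, b_\nu) \le \rho(a_\nu, c_\nu) + \rho(c_\nu, b_\nu).
\end{equation*}
The right-hand side is an integrable function of $\nu$ by assumption, so the left-hand side is dominated by an integrable function and hence integrable — this is in fact what secures existence of $\rho'(a,b)$ here — and monotonicity plus additivity of the integral gives $\rho'(a,b) \le \rho'(a,c) + \rho'(c,b)$.

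The main obstacle is not any single axiom but the \emph{measurability bookkeeping}: to write ``$\int_{\mathbb{S}^{d-1}} \rho(a_\nu,b_\nu)\,d\nu$'' one needs $\nu \mapsto \rho(a_\nu, b_\nu)$ to be measurable, and one needs Fubini-type statements to pass between ``a.e.\ $\nu$, a.e.\ $t$'' and ``a.e.\ $(\nu,t)$''. I would address this by noting that in the intended application (Definition of $d_{\ECT}$) the relevant map is manifestly measurable, and more generally state the lemma under the standing assumption that $a,b$ are jointly measurable and that $\nu \mapsto \rho(a_\nu,b_\nu)$ is measurable — an assumption implicit in the phrase ``if it exists.'' With that caveat in place, all three axioms follow from the corresponding axioms for $\rho$ by integrating, and the proof is complete.
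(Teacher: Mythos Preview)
Your proposal is correct and follows essentially the same route as the paper: verify non-negativity, identity of indiscernibles, symmetry, and the triangle inequality for $\rho'$ by applying the corresponding property of $\rho$ pointwise in $\nu$ and then integrating over $\mathbb{S}^{d-1}$. If anything, you are more careful than the paper's own proof, which glosses over the almost-everywhere subtleties in the identity-of-indiscernibles step and says nothing about measurability; your added remarks on a.e.\ equality and the measurability bookkeeping are genuine improvements rather than deviations.
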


\begin{proof}
By the non-negativity of $\rho$, the integrand $\rho(a_\nu, b_\nu)$ is non-negative over the domain $\mathbb{S}^{d-1}$, so $\rho'(a, b) \geq 0$. Moreover, $\rho'(a, b) = 0$ if and only if $\rho(a_\nu, b_\nu) = 0$ for all $\nu \in \mathbb{S}^{d-1}$, which occurs if and only if $a(\nu, t) = b(\nu, t)$ for all $\nu$. Hence, $\rho'(a, b) = 0$ if and only if $a = b$.

For the symmetry property, we have:
\[
\rho'(a, b) = \int_{\mathbb{S}^{d-1}} \rho(a_\nu, b_\nu)\, d\nu = \int_{\mathbb{S}^{d-1}} \rho(b_\nu, a_\nu)\, d\nu = \rho'(b, a),
\]
where we used the symmetry of $\rho$.

To verify the triangle inequality, note that for each $\nu \in \S ^{d-1}$,
\[
\rho(a(\nu, t), c(\nu, t)) \leq \rho(a_\nu, b_\nu) + \rho(b(\nu, t), c(\nu, t)).
\]
Integrating both sides over $\S ^{d-1}$ yields:
\begin{align*}
\rho'(a, c) &= \int_{\S ^{d-1}} \rho(a_\nu, c_\nu)\, d\nu \\
&\leq \int_{\S ^{d-1}} \left[ \rho(a_\nu, b_\nu) + \rho(b_\nu, c_\nu) \right]\, d\nu \\
&= \int_{\S ^{d-1}} \rho(a_\nu, b_\nu)\, d\nu + \int_{\S ^{d-1}} \rho(b_\nu, c_\nu)\, d\nu \\
&= \rho'(a, b) + \rho'(b, c).
\end{align*}

Therefore, $\rho'$ satisfies non-negativity, identity of indiscernibles, symmetry, and the triangle inequality, so it is a metric.
\end{proof}

We similarly establish this same integral-metric property for $\R$, and thus implicitly for compact subsets of $\R$. 
\begin{comment}
    
\Oscar{Does this need a qualifier about being a finite integral? When we are talking about existence I think we meant that the integral is finite, but there is also the notion that we can actually take the integral (which I think in the case of $\S ^{d-1}$ is guaranteed by the fact that the complex is finite, so as a function of the angle the ECC distance is a sort of step function. I dont know if that makes any sense)}
\Jas{Right after the ECT definition it is mentioned that it is finite in every direction because they're different embeddings of the same complex, which guarantees the integral is finite, at least over $\S ^{d-1}$. I am getting lost though on the ECC distance as a step function of the angle so..}
\Oscar{Muncle (math uncle) indicated for the integral of a metric is a metric proof that, while making sure it is finite is important for getting a distance, that really the thing we should be worried about is whether or not we can actually take the integral. I don't know enough analysis to say this with a certainty, but I think that we are using both Riemann and Lebesgue integration, almost interchangeably (Lebesgue whenever we run into a measure 0 thing). As a result, I had some doubts about whether or not we can integrate over $S^{d-1}$, as it is unclear to me whether we are guaranteed that this function is a step function a smooth-ish function or something else.  }
\end{comment}
\begin{lemma}
    Let $\rho$ be a metric on $\mathcal{L}^p(\S ^{d-1})$ defined on functions $\S ^{d-1} \to \R$.  
    For a function $a:\S ^{d-1} \times \R \longrightarrow \R$ we write $a_t(\nu) = a(\nu,t)$ so that $a_t:\S ^{d-1} \to \R$. 
    Then
    $\rho'(a,b) :=\int_\R \rho(a_t,b_t)\, dt$, if it exists, is also a metric.
\end{lemma}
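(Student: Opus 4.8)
The plan is to follow exactly the structure of the proof of Lemma~\ref{IntOfMetricSd-1}, simply swapping the roles of the two coordinates: previously we integrated over $\mathbb{S}^{d-1}$ a metric on $\mathcal{L}^p(\R)$-functions, and now we integrate over $\R$ a metric on $\mathcal{L}^p(\mathbb{S}^{d-1})$-functions. Formally, the argument is symmetric in the two factors, so the same four verifications (non-negativity, identity of indiscernibles, symmetry, triangle inequality) go through verbatim with $\int_{\mathbb{S}^{d-1}}\,d\nu$ replaced by $\int_\R\,dt$ and the slicing $a_\nu$ replaced by $a_t$.

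Concretely, I would proceed as follows. First, non-negativity: $\rho(a_t,b_t)\ge 0$ for every $t\in\R$ since $\rho$ is a metric, hence $\rho'(a,b)=\int_\R\rho(a_t,b_t)\,dt\ge 0$. Second, identity of indiscernibles: a non-negative integrand has zero integral iff it vanishes almost everywhere; combined with $\rho(a_t,b_t)=0 \iff a_t=b_t$ (as $\mathcal{L}^p$-classes), this gives $\rho'(a,b)=0 \iff a=b$ (again interpreting equality in the appropriate almost-everywhere sense, matching the convention already in force for $\mathcal{L}^p$). Third, symmetry is immediate from the symmetry of $\rho$ inside the integral. Fourth, for the triangle inequality, fix a third function $c:\mathbb{S}^{d-1}\times\R\to\R$; for each $t$ the metric axiom for $\rho$ gives $\rho(a_t,c_t)\le\rho(a_t,b_t)+\rho(b_t,c_t)$, and integrating this pointwise inequality over $\R$ and using linearity of the integral yields $\rho'(a,c)\le\rho'(a,b)+\rho'(b,c)$.

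The only genuine subtlety — the same one flagged in the commented-out discussion in the excerpt — is the qualifier \enquote{if it exists}: one must know that $t\mapsto\rho(a_t,b_t)$ is measurable and integrable on $\R$ so that $\rho'$ is well-defined and finite. In our applications this is guaranteed because the complexes are finite: the relevant functions of $t$ are piecewise constant with finitely many pieces on a bounded interval, so the integral reduces to a finite sum. Thus I would state the lemma with the existence hypothesis (as written) and remark that in the settings of this paper the integrand is a step function of $t$ supported on a compact interval, so the integral plainly exists; the measurability point is the main thing to be careful about, but it is subsumed by the \enquote{if it exists} clause. I would then simply write: \emph{the proof is identical to that of Lemma~\ref{IntOfMetricSd-1}, interchanging the roles of $\mathbb{S}^{d-1}$ and $\R$}, and optionally spell out the four steps as above for completeness.
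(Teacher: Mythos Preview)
Your proposal is correct and matches the paper's approach exactly: the paper simply states that the proof is almost identical to that of Lemma~\ref{IntOfMetricSd-1} and omits it, which is precisely what you suggest doing. Your spelled-out four-step verification and the remark about the \enquote{if it exists} clause are a faithful expansion of that omitted argument.
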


\begin{proof}
    As the proof is almost identical to that of Lemma \ref{IntOfMetricSd-1}, we omit it. 
\end{proof}

\section{Results}
In this section, we present our main stability result for the $\ECT$, showing that it is stable with respect to embedding distance. 
\begin{theorem}\label{thm:ectstablity}
Let $K$ be  an abstract simplicial complex $K$ with vertex set $V(K)$. 
Let
$$C_d = 2 \omega_{d-2} \int_0^\frac{\pi}{2}\cos(\theta)\sin^{d-2}(\theta)d\theta
\qquad \text{and }\qquad 
C_K = \max\limits_{{v \in V(K)}} |\{ \sigma \in K \mid v \in \sigma \}  |.$$ 
Then given two embeddings 
$$
f: K \rightarrow \R^d \text{ and } g:K \rightarrow \R^d$$
we have that
$$
d_{\ECT}\big{(}ECT(f(K)), ECT(g(K))\big{)} 
\leq 
2 C_K C_d \sum_{v\in V(K)}\|f(v) - g(v)\|_2.
$$
\end{theorem}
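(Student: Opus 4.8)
The plan is to chain together the three prior results that the paper has carefully assembled, since the bound essentially falls out of composing them. First I would unwind the definition of $d_{\ECT}$, writing
\begin{equation*}
d_{\ECT}\big(ECT(f(K)), ECT(g(K))\big) = \int_{\nu\in\mathbb{S}^{d-1}} \|\ECC_\nu(f(K)) - \ECC_\nu(g(K))\|_1 \, d\nu.
\end{equation*}
For each fixed direction $\nu$, the restricted form of Theorem~\ref{lowerbound} (the Dłotko--Gurnari bound applied to the height-function filtration) gives
\begin{equation*}
\|\ECC_\nu(f(K)) - \ECC_\nu(g(K))\|_1 \leq 2\, W_{1,\infty}\big(\Dgm(h_\nu^{f(K)}), \Dgm(h_\nu^{g(K)})\big).
\end{equation*}

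Next I would integrate this pointwise inequality over $\mathbb{S}^{d-1}$ (monotonicity of the integral), yielding
\begin{equation*}
d_{\ECT}\big(ECT(f(K)), ECT(g(K))\big) \leq 2 \int_{\nu\in\mathbb{S}^{d-1}} W_{1,\infty}\big(\Dgm(h_\nu^{f(K)}), \Dgm(h_\nu^{g(K)})\big)\, d\nu.
\end{equation*}
Then I would invoke Turner's comparison $W_{1,\infty} \leq W_{1,1}$ inside the integral to replace the $W_{1,\infty}$ integrand by $W_{1,1}$, and finally apply Theorem~\ref{upperbound} (Skraba--Turner), which bounds $\int_{\mathbb{S}^{d-1}} W_{1,1}(\Dgm(h_\nu^{f(K)}), \Dgm(h_\nu^{g(K)}))\, d\nu$ by $C_K C_d \sum_{v\in V(K)}\|f(v)-g(v)\|_2$. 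Composing gives the factor of $2$ in front and the stated bound.

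A few points need care rather than genuine difficulty. One should check that Fubini/monotonicity manipulations are legitimate: the integrand $\nu \mapsto \|\ECC_\nu(f(K)) - \ECC_\nu(g(K))\|_1$ is non-negative and, because $K$ is finite, behaves well enough as a function of $\nu$ for the integral to exist (this is exactly the existence caveat already flagged in the lemmas above, and it is guaranteed here since $f$ and $g$ are embeddings of the \emph{same} abstract complex, forcing $\chi(f(K)) = \chi(g(K))$ so each $L_1$ distance is finite). One should also confirm that Theorem~\ref{upperbound}'s hypotheses match — both filtrations come from height functions $h_\nu$ on embeddings of the same abstract $K$, so the constants $C_d$ and $C_K$ are precisely the ones in the statement, and the height function is simplex-wise monotone so the persistence diagrams are well defined.

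The main obstacle, such as it is, is bookkeeping the two distinct Wasserstein distances ($W_{1,\infty}$ from the ECC side, $W_{1,1}$ from the Skraba--Turner side) and making sure Turner's inequality is applied in the correct direction — we need $W_{1,\infty} \leq W_{1,1}$, which is the left half of Turner's two-sided bound, and this is what lets us pass from the ECC estimate to the embedding estimate without losing the inequality direction. Everything else is a direct substitution, so I would not expect to need any new lemmas beyond what the excerpt already provides.
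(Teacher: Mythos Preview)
Your proposal is correct and follows essentially the same route as the paper: apply the Dłotko--Gurnari bound pointwise in $\nu$, integrate over $\mathbb{S}^{d-1}$, use Turner's inequality $W_{1,\infty}\leq W_{1,1}$ to switch Wasserstein exponents, and finish with the Skraba--Turner estimate. The paper's proof is the same chain of three inequalities in the same order, with the same factor of $2$ coming from the first step.
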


\begin{proof}
Let $h_\nu$ denote the height function in direction $\nu$, and let $h_\nu^{f(K)}$ denote the filtration of $f(K)$ induced by $h_\nu$. 
From Theorem \ref{lowerbound}, %
we have that 
\begin{equation}
\label{eqn:frompolish}
{\|\ECC_\nu(f(K)) - \ECC_\nu(g(K))\|_1 \leq 2W_{1,\infty}(\Dgm(h_\nu^{f(K)}),\Dgm(h_\nu^{g(K)})).}
\end{equation}
From Theorem \ref{upperbound}, %
we have that 
\begin{equation}%
\label{eqn:fromscary}
{\int_{\nu\in S^{d-1}}W_{1,1}(\Dgm(h_\nu^{f(K)}),\Dgm(h_\nu^{g(K)})) d\nu \leq C_K C_d \sum_{v\in V(K)}\|f(v) - g(v)\|_2}.
\end{equation} 
Since by Lemma \ref{upperbound} %
we know that for any persistence diagrams $\Dgm(K_1), \Dgm(K_2)$, 
$$W_{1,\infty}(\Dgm(K_1), \Dgm(K_2)) \leq W_{1,1}(\Dgm(K_1), \Dgm(K_2)), $$
it follows that 
$$2W_{1,\infty}(\Dgm(h_\nu^{f(K)}),\Dgm(h_\nu^{g(K)}))\leq 2W_{1,1}(\Dgm(h_\nu^{f(K)}),\Dgm(h_\nu^{g(K)})).$$
Integrating \eqref{eqn:frompolish} over $\S ^{d-1}$, and multiplying \eqref{eqn:fromscary} by two we obtain
\begin{align*}
   d_{\ECT}\big{(}\ECT(f(K)), \ECT(g(K))\big{)} &
   = \underset{\nu\in \S ^{d-1}}{\int}\|\ECC_\nu(f(K)) - \ECC_\nu(g(K))\|_1 d\nu \\
   &\leq \underset{\nu\in \S ^{d-1}}{\int}2W_{1,\infty}(\Dgm(h_\nu^{f(K)}),\Dgm(h_\nu^{g(K)})) d\nu\\
   &\leq 2\int_{\nu\in S^{d-1}}W_{1,1}(\Dgm(h_\nu^{f(K)}),\Dgm(h_\nu^{g(K)})) d\nu\\
   &\leq 2 C_K C_d \sum_{v\in V(K)}\|f(v) - g(v)\|_2.
\end{align*}
\end{proof}

\section{$\SELECT$ Stability}
In this section, we define and extend this bound to $\SELECT$ as an extension of the $\ECT$ adapted to piecewise linear functions. 
Specifically, we say that $f: \R^d \to \R$ is a in $\PL(\R^d)$ if it is continued, supported on a finite geometric simplicial complex $K$, and the values of on the faces of $K$ are determined by linear interpolation of the vertices spanning the face. 
We also denote the set of continuous functions $g: X \to Y$ by $\Func(X,Y)$.
First, we define the Lifted Euler Characteristic Transform ($\LECT$).
Note that while 
\begin{defn}[\cite{kirveslahti2024representing}]
For piecewise linear functions $f: \R ^d \rightarrow \R $, the \emph{Lifted Euler Characteristic Transform} is given by a map
\begin{equation*}
\LECT: \PL(\R ^d) \rightarrow \Func(\S ^{d-1} \times \R  \times \R, \Z )    
\end{equation*}
defined as
\begin{equation*}
\LECT(f)(\nu,a,t) = \chi \Big( \left\{x \in \R ^{d} \ \mid \ \langle x ,\nu\rangle \le a , \, f(x) = t \right\} \Big). 
\end{equation*}
\end{defn}
Note that the original definition in \cite{kirveslahti2024representing} is given on a broader class of functions, called defineable functions which are related to o-minimal structures \cite{van1998tame}, however that generalization is beyond this scope of our work. 

$\LECT$ considers only level sets of a field, which turns out to be less than ideal. 
In particular, this poses a problem when trying to define a distance on fields, the usual tool for which is integration, due to each level set being measure 0. 
In order to account for this, we instead focus on the Super Lifted Euler Characteristic Transform, $\SELECT$.

\begin{defn}[\cite{kirveslahti2024representing}]
For piecewise linear  functions $f: \R ^d \rightarrow \R $, the \emph{Super Lifted Euler Characteristic Transform} (SELECT) is given by a map
\begin{equation*}
\SELECT: \PL(\R ^d) \rightarrow \Func(\S ^{d-1} \times \R  \times \R, \Z )   
\end{equation*}
defined as
\begin{equation*}
\SELECT(f)(\nu,a,t) = \chi \big(\{x \in \R ^{d} \ \mid \ \langle x , \nu\rangle \le a , \, f(x) \geq t \} \big). 
\end{equation*}
\end{defn}

In~\cite{kirveslahti2024representing} the injectivity of $\LECT$ and  $\SELECT$ is proven which relies on having complete information in terms of the resulting functions.
The addition of this extra parameter, the value of the function, poses a challenge in terms of extending our bound to $\SELECT$. 
In addition, our bound as it is right now deals exclusively in simplicial complexes, not at all in fields. 
As such, we will now dedicate a few paragraphs to establishing the setting in which our $\SELECT$ distance works.

For our SELECT results, we consider PL functions on a simplicial complex $K$. 
These are induced by a function $\phi: V(K)\rightarrow \R_{>0}$ assigning a value to each vertex of $K$. 
We then define the extension to the full complex $K$ by $\bar \phi(\sigma) = \min_{v \in \sigma} \phi(v)$, so that $K^t = \bar \phi \inv [t,\infty)$ is a simplicial complex for every $t$.  
In particular, this means that if $\sigma' \subseteq \sigma$, $\phi(\sigma') \geq \phi(\sigma)$. 
This is analagous to the standard definition of simplex-wise monoanalogous but allows us to have superlevel sets be subcomplexes rather than sublevel sets. 

We study the functions given by two embeddings of $K$, $f, g:K \to \R^d$ for this input function $\phi:V \to \R_{>0}$ given by 
$\bar{f}(x) := \bar \phi \circ f\inv$ where we take $f\inv(x)$ to mean the simplex $\sigma$ of $K$ for which $x \in f(\sigma)\subset \R^d$.
Likewise, we set $\bar{g}:= \bar\phi \circ g^{-1}$. 
Note that $\bar \phi$ is simplex-wise monotone and therefore constant on each simplex, so for any $t$, $\bar{f} \inv(t,\infty]$ is a subcomplex of the geometric complex $f(K)$ in that if it contains a point from the image of a simplex, it contains the entirety of that simplex.
A similar result holds for $g$. 

However, the simplex-wise monotone input means that the resulting functions are piecewise constant and thus not necessarily continuous, which was a requirement for a PL function. 
If we wanted a PL function, we could have instead looked at the following construction. 
Any point $x \in \mathrm{Im}(f)$ is part of the image of some unique simplex $\sigma$ and thus has barycentric coordinates based on the vertices of that simplex, $x = \sum_{v \in \sigma} b_v f(v)$ with $b_v \in [0,1]$ and $\sum b_v = 1$. 
So then we could define $\Phi(x) = \sum_{v \in \sigma} b_v \phi(v)$ to get a PL function. 
However, since this idea is closely related to the idea of a lower star filtration (see \cite[Chapter VI.3]{Edelsbrunner2010}), we can retract $\Phi \inv[t,\infty)$ to $K^t$ for any $t$. 
Because the two sets have the same homotopy type, they have the same Euler characteristic, and so we are justified in using the piecewise constant function instead.

Thus, we define $\twiddle{f}: \S^{d-1} \times \R \times \R$ as $\SELECT(\phi\circ f^{-1})$ and $\twiddle{g}$ as $\SELECT(\phi\circ g^{-1})$. 
Now, we can take the SELECT-distance from~\cite{kirveslahti2024representing}, and modify it to reflect our restriction on the codomain of our fields.
$$
d_{\SELECT}(\twiddle{f}, \twiddle{g}) 
= \left(\int_{\R _{>0}} \int_{\S ^{d-1}}  \int_{\R } \bigm\vert \tilde{f}(\nu,a,t)-\tilde{g}(\nu,a,t) \bigm\vert^p  \ da \ d\nu \ dt\right)^{1/p}.
$$
Though this is a $p$-distance, like many others, we will restrict its use in our paper to $p=1$ in order to match the ECT distance.
We can now prove the following theorem.

\begin{theorem}
Let $K$ be an abstract simplicial complex with vertex function $\phi:V\rightarrow\R_{>0}$, and let $f,g$ be two embedding functions $f, g: K\rightarrow\R^d$. 
Define constants
$$C_d = 2 \omega_{d-2} \int_0^\frac{\pi}{2}\cos(\theta)\sin^{d-2}(\theta)d\theta, 
\qquad 
C_K = \max\limits_{{v \in V(K)}} \left|\{ \sigma \in K \mid v \in \sigma \}  \right|$$
and $r_{\max} = \max \{\phi(\sigma) \mid \sigma \in K\}$. 
Then for $\twiddle{f} = \SELECT(\phi\circ f^{-1})$ and $\twiddle{g} = \SELECT(\phi\circ g^{-1})$, we have 
$$
d_{\SELECT}(\twiddle{f}, \twiddle{g}) \leq 2 r_{\max} C_d C_K \underset{v \in V(K)}{\sum}{\|f(v) - g(v)\|_2}.
$$
\end{theorem}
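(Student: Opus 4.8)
The plan is to reduce the $\SELECT$ statement to the already-proven $\ECT$ stability theorem (Theorem~\ref{thm:ectstablity}) by slicing the triple integral over the function-value parameter $t$. The key observation is that for a fixed $t \in \R_{>0}$, the superlevel set $K^t = \bar\phi^{-1}[t,\infty)$ is itself an abstract simplicial complex (a subcomplex of $K$), and the two embeddings $f, g$ restrict to embeddings $f|_{K^t}, g|_{K^t}$ of this subcomplex into $\R^d$. Moreover, for this fixed $t$, the function $(\nu,a) \mapsto \SELECT(\phi\circ f^{-1})(\nu,a,t) = \chi(\{x \mid \langle x,\nu\rangle \le a,\ \bar f(x) \ge t\})$ is exactly $\ECC_\nu(f(K^t), a)$ — the Euler characteristic curve of the embedded subcomplex $f(K^t)$ — because the retraction/homotopy argument in the setup identifies the superlevel set with the geometric realization of $K^t$ under $f$. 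The same holds for $g$. Hence the inner double integral $\int_{\S^{d-1}}\int_\R |\tilde f(\nu,a,t) - \tilde g(\nu,a,t)|\,da\,d\nu$ equals $d_{\ECT}(\ECT(f(K^t)), \ECT(g(K^t)))$.

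First I would make precise the claim that $K^t$ is a simplicial complex and that the sliced $\SELECT$ function is the $\ECT$ of its embeddings, citing the lower-star / retraction discussion already given in the text. Second, I would apply Theorem~\ref{thm:ectstablity} to $K^t$ with the embeddings $f|_{K^t}$ and $g|_{K^t}$, obtaining
\begin{equation*}
d_{\ECT}(\ECT(f(K^t)), \ECT(g(K^t))) \le 2\, C_{K^t} C_d \sum_{v \in V(K^t)} \|f(v) - g(v)\|_2,
\end{equation*}
where $C_{K^t} = \max_{v \in V(K^t)} |\{\sigma \in K^t \mid v \in \sigma\}|$. Third, I would bound the complex-dependent quantities uniformly in $t$: since $K^t \subseteq K$ we have $C_{K^t} \le C_K$, and $V(K^t) \subseteq V(K)$ so $\sum_{v \in V(K^t)} \|f(v)-g(v)\|_2 \le \sum_{v \in V(K)} \|f(v)-g(v)\|_2$. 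Fourth, I would integrate over $t$. The crucial point here is that $K^t$ is empty (so the integrand vanishes) whenever $t > r_{\max} = \max\{\phi(\sigma) \mid \sigma \in K\}$, and for $t \le 0$ the parameter is outside $\R_{>0}$; thus the effective range of integration is $(0, r_{\max}]$, of length at most $r_{\max}$. Integrating the uniform bound $2 C_K C_d \sum_{v\in V(K)}\|f(v)-g(v)\|_2$ over this interval yields the factor $r_{\max}$ and gives the claimed inequality.

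Concretely, the chain reads
\begin{align*}
d_{\SELECT}(\twiddle f, \twiddle g)
&= \int_{\R_{>0}} \left( \int_{\S^{d-1}} \int_\R |\tilde f(\nu,a,t) - \tilde g(\nu,a,t)|\, da\, d\nu \right) dt \\
&= \int_0^{r_{\max}} d_{\ECT}\big(\ECT(f(K^t)), \ECT(g(K^t))\big)\, dt \\
&\le \int_0^{r_{\max}} 2\, C_K C_d \sum_{v \in V(K)} \|f(v) - g(v)\|_2 \, dt \\
&= 2\, r_{\max} C_d C_K \sum_{v \in V(K)} \|f(v) - g(v)\|_2.
\end{align*}

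The main obstacle I anticipate is the first step: justifying rigorously that the sliced function genuinely coincides with the $\ECT$ of the embedded subcomplex, i.e.\ that $\chi(\{x \in \R^d \mid \langle x,\nu\rangle \le a,\ \bar f(x) \ge t\})$ equals $\chi\big((f(K^t))_{\nu,a}\big)$. This requires knowing that $\bar f^{-1}[t,\infty)$ is precisely the geometric realization $f(K^t)$ (up to homotopy, hence with equal Euler characteristic), which follows from the lower-star-filtration retraction argument the paper sketches but which should be stated carefully, together with the compatibility of the height-function sublevel-set operation with passing to a subcomplex. A secondary, purely measure-theoretic point worth a remark is the interchange of integration order and the existence (finiteness) of the iterated integral, which is unproblematic since everything in sight is bounded and supported on a bounded region, but should be acknowledged to match the hypotheses of the integral-metric lemmas.
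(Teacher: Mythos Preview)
Your proposal is correct and follows essentially the same route as the paper: rewrite $d_{\SELECT}$ as $\int_{\R_{>0}} d_{\ECT}(\ECT(f(K^t)),\ECT(g(K^t)))\,dt$, apply Theorem~\ref{thm:ectstablity} to each slice, bound $C_{K^t}\le C_K$ and $\sum_{v\in V(K^t)}\le \sum_{v\in V(K)}$, and use $K^t=\emptyset$ for $t>r_{\max}$ to reduce the integration to $[0,r_{\max}]$. The only cosmetic difference is the order in which the bounds on $C_{K^t}$ and the vertex sum are applied; the paper also handles the identification of the slice with the $\ECT$ via the lower-star retraction remark you flagged.
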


\begin{proof}
First, we can rewrite our modified SELECT-distance as
\begin{align*}
d_{\SELECT}(\twiddle{f}, \twiddle{g})_1 
&= \int_{\R _{>0}} \int_{\S ^{d-1}}  \int_{\R } 
     \left|  \tilde{f}(\nu,a,t)-\tilde{g}(\nu,a,t) \right|    \ da \ d\nu \ dt \\
&= \int_{\R _{>0}} \int_{\S ^{d-1}}  \int_{\R } \left | \ECC_\nu(f(K^t),a)-\ECC_\nu(g(K^t),a) \right |  \ da \ d\nu \ dt \\
&= \int_{\R _{>0}} \int_{\S ^{d-1}} \left \| \ECC_\nu(f(K^t))-\ECC_\nu(g(K^t)) \right\|_1 \ d\nu \ dt \\
&= \int_{\R _{>0}} d_{\ECT}\big(ECT(f(K^t)), ECT(g(K^t))\big) \ dt.
\end{align*}
We know by our ECT stability theorem, Thm.~\ref{thm:ectstablity}, that 
$$
d_{\ECT}\big(ECT(f(K^t)), ECT(g(K^t))\big) 
\leq 2 C_d C_{K^t} \sum_{v\in V(K^t)}\|f(v) - g(v)\|_2,
$$
so we obtain
\begin{align*}
d_{\SELECT}(\twiddle{f}, \twiddle{g})_1 &\leq \int_{\R _{>0}} 2 C_d C_{K^t} \sum_{v\in V(K^t)}\|f(v) - g(v)\|_2 \ dt \\
&\leq 2 C_d \int_{\R _{>0}} C_{K^t} \sum_{v\in V(K^t)}\|f(v) - g(v)\|_2 \ dt.
\end{align*}
We are able to pull $C_d$ out of the integral because it is not dependent on the subcomplex. The other terms, however, are dependent on the subcomplex. 
The constant
$$
C_{K^t} = \underset{v \in V(K^t)}{\max} |\{\sigma \in K^t \mid v \in \sigma\}|
$$
is the maximum number of cofaces for any vertex in $K^t$. 
As $t$ increases and the subcomplex contracts, the number of cofaces for each vertex will only ever decrease or stay the same. 
This means that $C_{K^t} \leq C_{K}$ for all $t \in \R _{>0}$.
Furthermore, for 
$t > r_{\max}=\underset{\sigma\in K}{\max} \{\phi(\sigma)\}$, $K^t = \emptyset$, so for large enough $t$, $C_{K^t} = 0$. 
This means that we can restrict our bounds of integration, loosen our upper bound, and then pull out $C_K$, as follows:
\begin{align*}
d_{\SELECT}(\twiddle{f}, \twiddle{g})_1 &\leq 2 C_d \int_{\R _{>0}} C_{K^t} \sum_{v\in V(K^t)}\|f(v) - g(v)\|_2 \ dt \\
&\leq 2 C_d \int_{0}^{r_{\max}} C_{K^t} \sum_{v\in V(K^t)}\|f(v) - g(v)\|_2 \ dt \\
&\leq 2 C_d \int_{0}^{r_{\max}} C_{K} \sum_{v\in V(K^t)}\|f(v) - g(v)\|_2 \ dt \\
&\leq 2 C_d C_K \int_{0}^{r_{\max}} \sum_{v\in V(K^t)}\|f(v) - g(v)\|_2  \ dt.
\end{align*}
We can use similar logic to deal with the remaining integrand. 
$V(K^t) \subseteq V(K)$ for every $t$, so 
$$
\sum_{v\in V(K^t)} \|f(v) - g(v)\|_2 
\leq 
\sum_{v\in V(K)} \|f(v) - g(v)\|_2
$$
for every $t$. 
This leads to our final upper bound.
\begin{align*}
d_{\SELECT}(\twiddle{f}, \twiddle{g})_1 &\leq 2 C_d C_K \int_{0}^{r_{\max}} \sum_{v\in V(K^t)}\|f(v) - g(v)\|_2  \ dt \\ 
&\leq 2 C_d C_K \int_{0}^{r_{\max}} \sum_{v \in V(K)}\|f(v) - g(v)\|_2  \ dt \\
&\leq 2 C_d C_K \sum_{v\in V(K)}\|f(v) - g(v)\|_2 \int_{0}^{r_{\max}} 1 \ dt \\
&\leq 2 r_{\max} C_d C_K \sum_{v\in V(K)}\|f(v) - g(v)\|_2.
\end{align*}
\end{proof}

\section{Discussion and Conclusion} %

The stability of the Euler Characteristic Transform (ECT) is crucial for its dependable application in shape analysis. 
This work contributes to this area by establishing an explicit upper bound (Theorem \ref{thm:ectstablity}) on the $d_{\ECT}$ distance. 
This bound arises when comparing two distinct geometric embeddings of the same abstract simplicial complex $K$ and directly links variations in the ECT to the sum of $L_2$ distances between corresponding vertex positions. 
This quantifies ECT's sensitivity to geometric perturbations when the underlying combinatorial topology is fixed. 
We subsequently extended this stability framework to the Super Lifted Euler Characteristic Transform (SELECT), considering fields induced by simplex-wise monotone functions $\phi$ on $K$, and similarly connected the $d_{\SELECT}$ stability to vertex displacements and the range of $\phi$.

These findings provide important theoretical guarantees for the robustness of ECT and SELECT. Such stability is particularly relevant in applications involving deformable objects or variations in pose where the intrinsic mesh topology is preserved, supporting the reliable use of these TDA descriptors in analytical pipelines.

However, our results are contingent on specific conditions: primarily, the comparison of embeddings of an identical abstract complex and, for SELECT, the use of a particular class of field-inducing functions. 
A significant direction for future research is therefore the generalization of these stability results. 
This includes extending SELECT stability to more commonly used field representations, such as those derived from piecewise-linear interpolation, and developing frameworks for ECT/SELECT comparisons across different underlying combinatorial structures. 
Refining the derived bounds also remains an important objective.

\section{Acknowledgements}
This paper is the result of work from a team of undergraduate researchers in Summer 2024 supported by the SURIEM and ACRES REU programs at Michigan State University. 
The authors would like to acknowledge the gracious support of the National Science Foundation through Award 
Nos.~DMS-2244461, %
CCF-2349002, %
CCF-2142713, %
and CCF-2106578; as well as Michigan State University.
We would also like to thank Dr.~Robert Bell; and Dr.~Mahmoud Parvizi and Dr.~Alex Dickson for organizing the SURIEM and ACRES REUs respectively. 

  \bibliographystyle{plain}
   \bibliography{references}

\end{document}